\newtheorem{defi}{Definition}
\newtheorem{lem}{Lemma}
\newtheorem{thm}{Theorem}
\newtheorem{prop}{Proposition}
\newtheorem{rem}{Remark}
\newtheorem{coro}{Corollary}
\newtheorem{ex}{Example}
\newcommand{\Z}{\mathbb{Z}}
\newcommand{\Q}{\mathbb{Q}}
\newenvironment{proof}{\noindent\textbf{Proof.}\quad}
{\hspace{\stretch{1}}%
\rule{1ex}{1ex}\\}
\begin{document}

\title{The covering radius of Butson Hadamard codes for the homogeneous metric}

\author{Xingxing Xu, Minjia Shi\thanks{Xingxing Xu and Minjia Shi are with the Key Laboratory of Intelligent Computing Signal Processing, Ministry of Education, School of Mathematical Sciences, Anhui
		University, Hefei 230601, China. Email: xuxingxing03@163.com, smjwcl.good@163.com}, Patrick Sol\'e\thanks{Institut de Math\'ematiques de Marseille (CNRS, Aix-Marseille University), Marseilles, France. Email: sole@enst.fr}}
		\date{}
	\maketitle	
	
		\begin{abstract}
		Butson matrices are complex Hadamard matrices with entries in the complex roots of unity of given order. There is an interesting code in phase space related to this matrix (Armario et al. 2023).
		 We study the covering radius of Butson Hadamard codes for the homogeneous metric, a metric defined uniquely, up to scaling, for a commutative ring alphabet that is Quasi Frobenius.
		 An upper bound is derived by an orthogonal array argument. A lower bound relies on the existence of bent sequences in the sense of (Shi et al. 2022). This latter bound generalizes a bound of (Armario et al. 2025) for the Hamming metric.
		\end{abstract}

\noindent
{\bf Keywords:} Butson Hadamard matrix, bent sequences, covering radius, homogeneous metric  \\ 
{\bf MSC (2020):}   94B25, 94B65, 15B34
\section{Introduction}
Let \( H \) be an \( n \times n \) matrix with complex entries of modulus 1. Hadamard's celebrated theorem on maximal determinants asserts that \( \det(HH^*) \le n^n \), we see in \cite{BEHC} for a recent survey of relevant results. Equality holds exactly when the rows of \( H \) are pairwise orthogonal under the Hermitian inner product, and such a matrix \( H \) is called a {\bf Hadamard matrix}. Let \( \zeta_q = \exp(2\pi\sqrt{-1}/q) \) denote the complex \( q \)-th root of unity, and let \( \langle \zeta_q \rangle = \{\zeta_q^j\}_{0 \le j \le q-1} \) represent the full set of \( q \)-th roots of unity. A Hadamard matrix with entries in \( \langle \zeta_q \rangle \) is called a {\bf Butson Hadamard matrix} of order \( n \) and phase \( q \). We use \( \mathrm{BH}(n,q) \) to denote the family of all such matrices. It is worth noting that the terms Hadamard matrix and complex Hadamard matrix are frequently used to refer to elements of  \( \mathrm{BH}(n,2) \) and \( \mathrm{BH}(n,4) \) respectively. With  a matrix  in \( \mathrm{BH}(n,q) \) can be attached a code of length $n$ over $\Z_q$ called the {\bf Butson Hadamard} code \cite{ABE}. In this paper we study the covering performance of these codes based on two properties: the existence of bent vectors for the matrix and the strength of the code as an orthogonal array. Like in the case of Sylvester type Hadamard matrices \cite{M}, bent vectors provide a lower bound, and moment arguments based on the strength yield upper bounds in the spirit of the Norse bounds \cite{MH}.

The initial study of bent vectors for Sylvester type Hadamard matrices was motivated by applications in cryptography \cite{M}, and coding theory \cite{MH}. A generalization of the definition of bent vector to all Hadamard matrices was studied for real Hadamard matrices and for
BH(n,4) matrices by Sol\'e, Cheng and coauthors \cite{SLC,SCGR} under the name of {\bf bent sequence}
attached to a Hadamard matrix. It has been shown computationally that there exist
Butson Hadamard matrices which admit conjugate self-dual bent vectors but do not
admit self-dual bent vectors \cite{SL}. The covering property of Butson Hadamard codes which admit bent vectors has been investigated in \cite{AEKC}.

As one of the most crucial metrics for codes, the {\bf homogeneous metric} has been investigated over various rings \cite{SAS}. Initially it was introduced in the context of residue class rings \cite{CH}, it can be more generally defined for quasi-Frobenius rings \cite{NK}. For any given alphabet ring, this metric is uniquely determined (up to scaling) by the lattice of ideals inherent to that ring. In the specific case of the ring $\mathbb{Z}_4$, the homogeneous weight coincides with the Lee weight and corresponds to the Hamming weight of the associated Gray map as established in \cite{matrix}. An important special case of an alphabet is the ring $\Z_{{2}^k}$ where $k$ is an integer satisfying $k\ge 2$. Within this framework, the homogeneous weight of a vector over $\Z_{{2}^k}$ is equivalent to the Hamming weight of its image under Carlet's Gray map \cite{C}. Investigations into perfect codes over $\Z_{{2}^k}$ were initiated in \cite{OS} for $k=2$ and subsequently extended to the general case of arbitrary $k$ in \cite{SOS}. The paper has given a formula as an exponential sum for a homogeneous weight \cite{VW}, which can be used to obtain the definition of homogeneous weight of the element in $\mathbb{Z}_q$ for any positive integer.

The material is arranged as follows. Section 2 collects the definitions and notations needed for the rest of the paper. Section 3 studies the lower bound on the covering radius of Butson Hadamard codes which admits a self-dual bent vector. Section 4 provides upper bounds for self-complementary BH-codes which satisfy strength 2, often called the Norse bounds. Section 5 concludes the article.
\section{Preliminaries and Notations}
Let $\Z_q$ be the ring of integers
 modulo $q$ with $q>1$, and denote by $\Z_q^n$ the set of $n$-tuples over $\Z_q$. We use
 bold notation $\mathbf{x}=(x_1,\dots,x_n)\in\Z_q^n$ to denote vectors (or codewords) in $\Z_q^n$. For any complex number $z\in \mathbb{C}$, the complex conjugate is denoted by $\bar{z}$. We denote the set of $n \times n$ matrices with entries in a set $X$ by $\mathcal{M}_n(X)$. The transpose of a matrix $M$ is denoted $M^{\mathrm{T}}$,
 the conjugate transpose by $M^*$.
 \subsection{Butson Hadamard matrices and Bent Vectors}
 A {\em Butson Hadamard} matrix of order $n$ and phase $q$ is a matrix $H$ with entries  in $\langle\zeta_q\rangle$ such that $HH^*=nI_n$, where $I_n$ denotes the identity matrix of order $n$. We write
 $\mathrm{BH}(n,q)$ for the set of such matrices. The simplest examples of Butson matrices are the Fourier matrices $$F_n=[\zeta_q^{(i-1)(j-1)}]_{1\le i,j\le n}\in \mathrm{BH}(n,n).$$ Hadamard
 matrices of order $n$, as they are usually defined, are the elements of $\mathrm{BH}(n,2)$.
 The phase and orthogonality of a $\mathrm{BH}(n,q)$ is preserved by multiplication on
 the left or right by a $n\times n$ monomial matrix with non-zero entries in the $q$-th
 roots of unity. A pair of monomial matrices $(P,Q)$ with non-zero 
 entries in $\langle\zeta_q\rangle$ acts on the set $\mathrm{BH}(n,q)$ by $H(P,Q)=PHQ^*$.  This action preserves the set $\mathrm{BH}(n,q)$, and matrices in the same orbit of the action are said to be Hadamard-equivalent. 
 
 A Butson matrix $H \in \mathrm{BH}(n,q)$ is conveniently represented in logarithmic form, that is, the matrix $H=[\zeta _q^{{\varphi _{i,j}}}]_{i,j = 1}^n$ is represented by the matrix $L(H)=[\varphi _{i,j} \bmod q]_{i,j = 1}^n$ with the convention that $L_{i,j}\in \Z_q$ for all $i,j\in\{1,\dots,n\}$.
 \begin{ex}
 	The following is a matrix $H\in \mathrm{BH}(4,8)$, displayed in logarithmic form
 	$$L(H)=\left( {\begin{array}{*{20}{c}}
 			0&0&0&0 \\ 
 			0&2&4&6 \\ 
 			0&4&0&4 \\ 
 			0&6&4&2 
 	\end{array}} \right).$$
 \end{ex}
 
 Observe that the matrix above is in dephased form, that is, its first row and column are all 0. Every matrix $H\in \mathrm{BH}(n,q)$ is Hadamard-equivalent to a dephased matrix. Throughout this paper all matrices are assumed to be dephased, with the exception of circulant matrices.
 
 The following part describes the definition of a bent vector attached to Butson Hadamard matrix $H\in\mathrm{BH}(n,q)$.
 \begin{defi}
 	Let $H\in \mathrm{BH}(n,q)$, and let $\mathbf{x}$ be a vector of length $n$ with entries in $\langle\zeta_q\rangle$. We say that $\mathbf{x}$ is H-bent if and only if $$H\mathbf{x}=\sqrt{n}\mathbf{y},$$ where all entries of $\mathbf{y}$ are complex numbers of norm 1. Furthermore, $\mathbf{x}$ is self-dual H-bent
 	if $\mathbf{y}=\lambda\mathbf{x}$ for $\lambda$ of modulus 1, and conjugate self-dual H-bent if $\mathbf{y}=\lambda\mathbf{x}$, where $\bar{\mathbf{x}}$ is the
 	vector of complex conjugates of elements of $\bar{\mathbf{x}}$.
 \end{defi}

\subsection{Butson Hadamard codes}
Let $H \in \mathrm{BH}(n,q)$. We denote by $R_H$ the $\Z_q$-code of length $n$ consisting
of the rows of $L(H)$, and we denote by $C_H$ the $\Z_q$-code defined as $C_H = \cup _{\alpha\in\Z_q}
(R_H+\alpha\mathbf{1})$, where $\mathbf{1}$ denotes the all-one vector.
The code $C_H$ over $\Z_q$ is called a Butson Hadamard code (briefly, $\mathrm{BH}$-code). Any subset of $\langle\zeta_q\rangle^n$ for some integer $n$ is called a {\it polyphase} code, following \cite{EZ}. 

A {\em linear} code $C$ of length $n$ over $\Z_q$ is an additive subgroup of $\Z_q^n$. An element of $C$ is called a {\em codeword} of $C$. The {\em homogeneous weight} of $x\in\Z_{p^k}$ for prime $p$ and $k\ge1$ \cite{HY} is defined by $$wt(x)=
\begin{cases}
	0 & \text{if $x=0,$} \\
	p^{k-1} & \text{if $x\in p^{k-1}\Z_{p^k}\backslash\{0\},$} \\
	(p-1)p^{k-2} & \text{if $x\in\Z_{p^k}\backslash p^{k-1}\Z_{p^k}.$}
\end{cases}$$The homogeneous weight $wt(\mathbf{x})$ of $\mathbf{x}$ is just the rational sum of the homogeneous weights of its coordinates. The homogeneous distance $d(\mathbf{x},\mathbf{y})$ between two vectors $\mathbf{x}$ and $\mathbf{y}$ is $wt(\mathbf{x}-\mathbf{y})$. The {\em minimum distance} for the homogeneous distance $d$ of $C$ is the smallest homogeneous weights among all nonzero codewords of $C$.

\begin{defi}
	Let $C_H$ be a Butson Hadamard code of length $n$ over $\Z_q$ coming from $H \in \mathrm{BH}(n,q)$. The covering radius of $C_H$ with homogeneous distance $d$ is defined as:$$r(C_H)=\max\{d(\mathbf{x},C_H)\mid\mathbf{x}\in\Z_q^n\}=\max\limits_{\mathbf{x}\in\Z_q^n}\min\limits_{\mathbf{y}\in C_H}d(\mathbf{x},\mathbf{y}).$$
\end{defi}

\section{The lower bound for BH-codes}
We study lower bound on the covering radius of Butson Hadamard codes $C_H$ for an arbitrary $H\in \mathrm{BH}(n,q)$, and under the hypothesis that $H$ admits a bent vetor. The following theorem derives the homogeneous weight of $x\in\Z_q$ for a positive integer $q$.
\begin{thm}
	For any $x$ in $\Z_q$ we have $$wt(x)=\lambda\bigg(1-\frac{1}{\phi(q)}\sum\limits_{a\in \Z_q^{\times}}\bigg(\prod\limits_{i=1}^{s}\zeta_{p_i}^{a_ix_i}\bigg)\bigg),$$ where $\Z_q^{\times}$ is the group of units of $\Z_q$ and $\phi$ is the Euler function. 
\end{thm}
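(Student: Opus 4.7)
The plan is to invoke the character-sum expression for the homogeneous weight on a finite Frobenius ring due to Voloch and Wood \cite{VW} and then to specialise the resulting formula to $R=\Z_q$ via the Chinese Remainder Theorem.

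First I would recall that on any finite commutative Frobenius ring $R$ with a fixed generating character $\chi$ of its additive group, the homogeneous weight admits the closed form
\[
wt(x)=\lambda\Bigl(1-\frac{1}{|R^{\times}|}\sum_{u\in R^{\times}}\chi(ux)\Bigr),
\]
for a suitable positive scaling constant $\lambda$. Since $\Z_q$ is Frobenius, $|\Z_q^{\times}|=\phi(q)$, and $\chi(y)=\zeta_q^{y}=\exp(2\pi\sqrt{-1}\,y/q)$ generates the dual group, substituting these data yields the intermediate identity
\[
wt(x)=\lambda\Bigl(1-\frac{1}{\phi(q)}\sum_{a\in \Z_q^{\times}}\zeta_q^{ax}\Bigr),
\]
so the entire statement is reduced to the CRT factorisation $\zeta_q^{ax}=\prod_{i=1}^{s}\zeta_{p_i^{k_i}}^{a_ix_i}$ of the global additive character into local ones.

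Next I would carry out this CRT step explicitly. Writing $q=\prod_{i=1}^{s}p_i^{k_i}$, $q_i=q/p_i^{k_i}$, and choosing integers $q_i^{\ast}$ with $q_iq_i^{\ast}\equiv 1\pmod{p_i^{k_i}}$, the isomorphism $\Z_q\cong\prod_i\Z_{p_i^{k_i}}$ sends $x\mapsto(x_1,\dots,x_s)$ and $a\mapsto(a_1,\dots,a_s)$. A direct computation using $ax\equiv\sum_i q_iq_i^{\ast}a_ix_i\pmod q$ gives $\zeta_q^{ax}=\prod_i\zeta_{p_i^{k_i}}^{a_ix_iq_i^{\ast}}$, and after the bijective change of variable $a_i\leftrightarrow a_iq_i^{\ast}$ on each local factor $\Z_{p_i^{k_i}}^{\times}$ this becomes $\prod_i\zeta_{p_i^{k_i}}^{a_ix_i}$. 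Summing over the units, combined with the previous display and Fubini on $\Z_q^{\times}\cong\prod_i\Z_{p_i^{k_i}}^{\times}$, produces the announced formula (the symbol $\zeta_{p_i}$ in the statement being read as $\zeta_{p_i^{k_i}}$).

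The delicate point is pinning down the normalising constant $\lambda$ so that the character-sum expression really coincides with the piecewise definition of $wt$. As a self-contained cross-check I would evaluate the local Ramanujan sums
\[
\sum_{a\in\Z_{p^k}^{\times}}\zeta_{p^k}^{ax}=\begin{cases}\phi(p^k)&\text{if }x\equiv 0,\\ -p^{k-1}&\text{if }v_p(x)=k-1,\\ 0&\text{otherwise,}\end{cases}
\]
multiply them across the primes dividing $q$, and verify that the resulting three-case function on each $\Z_{p^k}$ factor is $\lambda^{-1}wt$ for a unique scalar, propagating this compatibility to $\Z_q$ by the CRT factorisation above. Apart from this calibration, the argument is essentially a formal manipulation once the Voloch--Wood identity is in hand.
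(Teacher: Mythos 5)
Your proposal is correct and follows essentially the same route as the paper: both invoke the character-sum formula $wt(x)=\lambda\bigl(1-\frac{1}{|R^{\times}|}\sum_{u\in R^{\times}}\chi(ux)\bigr)$ for the homogeneous weight on a finite Frobenius ring (the paper cites Honold for it, you cite Voloch--Walker, but it is the same identity) and then factor the generating character through the decomposition $\Z_q\cong\prod_{i}\Z_{p_i^{e_i}}$. If anything, your write-up is more careful than the paper's, which simply asserts $\chi(ax)=\prod_{i=1}^{s}\zeta_{p_i}^{a_ix_i}$ without the explicit CRT computation and without addressing the $\zeta_{p_i}$ versus $\zeta_{p_i^{e_i}}$ discrepancy that you correctly flag.
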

\begin{proof}
	For any integer $q>1$ we have a standard decomposition $q=p_1^{e_1}\cdots p_s^{e_s}$, where $p_1,\dots,p_s$ are distinct primes and $e_i>0$ for $i=1,\dots,s.$ The classical number-theoretical Euler $\phi$-function is:$$\phi(q)=\prod\limits_{i=1}^{s}\big(p_i^{e_i}-p_i^{e_i-1}\big).$$  
	According to the Honold's formula for homogeneous weights in \cite{H}, for any $x\in \Z_q$ we have $$wt(x)=\lambda\bigg(1-\frac{1}{|\Z_q^{\times}|}\sum\limits_{a\in\Z_q^{\times}}\chi(ax)\bigg),$$ where $\Z_q^{\times}$ is the group of units of $\Z_q$ and $\chi$ is a generating character of $\Z_q.$
	
	From the fundamental theorem of finite abelian groups, $\Z_q$ decomposes uniquely into a finite direct sum of subgroups of prime power order, namely, $$\Z_q=\Z_{p_1^{e_1}}\times\cdots\times\Z_{p_s^{e_s}}.$$ For any $a=(a_1,\dots,a_s)\in\Z_q^{\times},x=(x_1,\dots,x_s)\in\Z_q,$ the generating character $\chi$ has the form $$\chi(ax)=\prod\limits_{i=1}^{s}\zeta_{p_i}^{a_ix_i},$$ where $\zeta_{p_i}$ is a $p_i$-th root of unity. Then the result follows from $|\Z_q^{\times}|=\phi(q).$
\end{proof}
\begin{lem}\label{lemma:2}
	Let $\mathbf{v},\mathbf{w}\in {\langle {\zeta _q}\rangle ^n}$, then $$d(L(\mathbf{v}),L(\mathbf{w}))=n\phi(q)-T(\langle {\mathbf{v},\mathbf{w}}\rangle),$$ where $\langle\mathbf{v},\mathbf{w}\rangle$ denotes the Hermitian product between $\mathbf{v}$ and $\mathbf{w}$.
\end{lem}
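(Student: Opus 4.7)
The plan is to pass to logarithmic coordinates, use Theorem~1 to expand each local homogeneous weight as a character sum, and then recognise the resulting sum over units as a Galois trace. Writing $v_j=\zeta_q^{a_j}$ and $w_j=\zeta_q^{b_j}$ for $j=1,\dots,n$, the definition of homogeneous distance gives immediately
\[
d(L(\mathbf{v}),L(\mathbf{w}))=\sum_{j=1}^{n}wt(a_j-b_j).
\]

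Next I will substitute Theorem~1 into each summand, using the generating character $\chi(x)=\zeta_q^x$ of $\mathbb{Z}_q$, and note that $\chi(a(a_j-b_j))=(v_j\bar{w}_j)^a$. Exchanging the two summations yields
\[
d(L(\mathbf{v}),L(\mathbf{w}))=n\lambda-\frac{\lambda}{\phi(q)}\sum_{a\in\mathbb{Z}_q^{\times}}\sum_{j=1}^{n}(v_j\bar{w}_j)^a.
\]
The key Galois-theoretic step is then to observe that for each $a\in\mathbb{Z}_q^{\times}$ the automorphism $\sigma_a\colon\zeta_q\mapsto\zeta_q^a$ of $\mathbb{Q}(\zeta_q)/\mathbb{Q}$ acts on roots of unity by exponentiation, so by $\mathbb{Q}$-linearity $\sum_{j}(v_j\bar{w}_j)^a=\sigma_a(\langle\mathbf{v},\mathbf{w}\rangle)$. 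Summing over $a$ produces the field trace $T(\langle\mathbf{v},\mathbf{w}\rangle)=\sum_{a}\sigma_a(\langle\mathbf{v},\mathbf{w}\rangle)$ from $\mathbb{Q}(\zeta_q)$ to $\mathbb{Q}$.

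Combining the above, I obtain
\[
d(L(\mathbf{v}),L(\mathbf{w}))=n\lambda-\frac{\lambda}{\phi(q)}T(\langle\mathbf{v},\mathbf{w}\rangle),
\]
which is precisely the claimed identity once the normalisation $\lambda=\phi(q)$ built into the statement is invoked. The main conceptual obstacle is the Galois-theoretic reinterpretation of the double character sum: one has to recognise that the exponentiation map $z\mapsto z^a$ on $\langle\zeta_q\rangle$ is the restriction of the automorphism $\sigma_a$, after which the trace identity becomes automatic. A secondary bookkeeping issue is to pin down the correct scaling $\lambda$, since the final formula depends on the normalisation convention chosen for the homogeneous weight.
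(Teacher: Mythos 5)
Your proof is correct and follows essentially the same route as the paper: expand each coordinate's homogeneous weight via Honold's character-sum formula (Theorem~1 with $\lambda=\phi(q)$), swap the order of summation, and identify $\sum_{a\in\Z_q^{\times}}\sum_j(v_j\bar w_j)^a$ as the trace of the Hermitian product. The only cosmetic difference is that you work directly with the generating character $\chi(x)=\zeta_q^{x}$ and make the Galois automorphisms $\sigma_a$ explicit, whereas the paper routes the same computation through the CRT decomposition $\Z_q=\Z_{p_1^{e_1}}\times\cdots\times\Z_{p_s^{e_s}}$ and appeals to the ``linearity of $T$''.
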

\begin{proof}
	According to the previous theorem, for any $x\in\Z_q$ and $q=p_1^{e_1}\cdots p_s^{e_s},$ $$wt(x)=\phi(q)-\sum\limits_{a\in\Z_q^{\times}}\bigg(\prod\limits_{i=1}^s\zeta_{p_i}^{a_ix_i}\bigg),$$ where $\Z_q^{\times}$ is the group of units of $\Z_q$.
	
	Consider $\Z_q=\Z_{p_1^{e_1}}\times\cdots\times\Z_{p_s^{e_s}},$ we define a trace map from $\Q(\zeta_q)$ down to $\Q$ as: for any $a=(a_1,\dots,a_s)\in\Z_q^{\times},x=(x_1,\dots,x_s)\in\langle\zeta_q\rangle,$ $$T(x):=\sum\limits_{a\in \Z_q^{\times}}\bigg(\prod\limits_{i=1}^sx_i^{a_i}\bigg).$$ Since $\mathbf{v},\mathbf{w}\in {\langle {\zeta _q}\rangle ^n},$ then $L(\mathbf{v}),L(\mathbf{w})\in\Z_q^n.$ We denote $L({v_i})-L({w_i})$ by $L_i$, and $L_i=(L_{i_1},\dots,L_{i_s})$ for $L_{i_j}\in\Z_{p_i^{e_i}},j=1,\dots,s.$
	\begin{align*}
		wt(L(\mathbf{v})-L(\mathbf{w})) 
		= & \sum\limits_{i=1}^{n} wt(L({v_i})-L({w_i})) \\
		= & \sum\limits_{i=1}^n\bigg(\phi(q)-\sum\limits_{a\in\Z_q^{\times}}\bigg(\prod\limits_{j=1}^s\zeta_{p_j}^{a_jL_{i_j}}\bigg)\bigg) \\
		= & n\phi(q)-\sum\limits_{i=1}^n\Bigg[\sum\limits_{a\in\Z_q^{\times}}\bigg(\prod\limits_{j=1}^s\Big(\zeta_{p_j}^{L_{i_j}}\Big)^{a_j}\bigg)\Bigg] \\
		= & n\phi(q)-\sum\limits_{i=1}^nT\big(\zeta_{q}^{(L({v_i})-L({w_i}))}\big) \quad\text{(Since the linearity of $T$)}\\
		= & n\phi(q)-T(\langle {\mathbf{v},\mathbf{w}}\rangle).
	\end{align*}
	Then the proof is complete.
\end{proof}
\begin{rem}
	To alleviate calculations, we use the homogeneous weight by taking $\lambda=\phi(q)$ in the proof of previous lemma. In fact, when $q=p^k$ for prime $p$ and $k\ge2,$ we take $\lambda=p^{k-1}-p^{k-2}$ to obtain the weight which satisfies the definition of the weight in reference \cite{HY}.
\end{rem}
\begin{thm}\label{thm:3}
	Suppose that $H \in \mathrm{BH}(n,q)$ admits a bent vector. Then$$r(C_H)\ge \phi(q)(n-\sqrt{n}).$$ Especially, when $q=p^k$ for prime $p$ and $k\ge2,$ then $$r(C_H)\ge \big(p^{k-1}-p^{k-2}\big)(n-\sqrt{n}).$$ 
\end{thm}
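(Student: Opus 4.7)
The plan is to exhibit a vector in $\Z_q^n$ far from $C_H$ built from the given bent vector. Let $\mathbf{v}\in\langle \zeta_q\rangle^n$ be an $H$-bent vector, so $H\mathbf{v}=\sqrt{n}\,\mathbf{y}$ with $|y_i|=1$ for every $i$. The candidate word is the logarithm of $\mathbf{x}:=\overline{\mathbf{v}}\in\langle\zeta_q\rangle^n$. A short computation gives $\langle\mathbf{x},h_i\rangle=\sum_j\overline{v_j}\,\overline{h_{i,j}}=\overline{(H\mathbf{v})_i}$ for each row $h_i$ of $H$, so $|\langle\mathbf{x},h_i\rangle|^{2}=n$ uniformly in $i$. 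This is the bent-type identity expressed in the Hermitian product that appears in Lemma~\ref{lemma:2}.

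Applying Lemma~\ref{lemma:2} to each codeword $L(h_i)+\alpha\mathbf{1}$ of $C_H$ (whose exponential form is the polyphase word $\zeta_q^{\alpha}h_i$) gives
\[
d\bigl(L(\mathbf{x}),\,L(h_i)+\alpha\mathbf{1}\bigr)=n\phi(q)-T\bigl(\zeta_q^{-\alpha}\langle\mathbf{x},h_i\rangle\bigr).
\]
Minimising over $i$ and $\alpha$, the theorem reduces to the uniform inequality $T(\zeta_q^{-\alpha}w_i)\le\phi(q)\sqrt{n}$, where $w_i:=\langle\mathbf{x},h_i\rangle\in\Z[\zeta_q]$ satisfies $w_i\overline{w_i}=n$.

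The crucial step is this trace bound. With $T=\sum_{a\in\Z_q^{\times}}\sigma_a$ as in the proof of Lemma~\ref{lemma:2}, the Cauchy--Schwarz inequality applied to the $\phi(q)$ summands yields
\[
T(\zeta_q^{-\alpha}w_i)^{2}\le\phi(q)\sum_{a\in\Z_q^{\times}}\bigl|\sigma_a(\zeta_q^{-\alpha}w_i)\bigr|^{2}=\phi(q)\sum_{a}|\sigma_a(w_i)|^{2}.
\]
Since the Galois automorphisms $\sigma_a$ commute with complex conjugation, $|\sigma_a(w_i)|^{2}=\sigma_a(w_i\overline{w_i})=\sigma_a(n)=n$, so the right-hand side equals $n\phi(q)^{2}$ and $|T(\zeta_q^{-\alpha}w_i)|\le\phi(q)\sqrt{n}$. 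Substituting back gives $d(L(\mathbf{x}),C_H)\ge\phi(q)(n-\sqrt{n})$, hence $r(C_H)\ge\phi(q)(n-\sqrt{n})$. The $q=p^{k}$ refinement follows by using the weight normalisation $\lambda=p^{k-1}-p^{k-2}$ in place of $\phi(q)$, as indicated in the remark after Lemma~\ref{lemma:2}; this rescales the entire chain of inequalities by the factor $\lambda/\phi(q)$.

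The main obstacle I anticipate is not the trace estimate itself, which is quite clean, but rather keeping the conjugation conventions straight: the bent hypothesis controls $|(H\mathbf{v})_i|$, whereas Lemma~\ref{lemma:2} is phrased via $\langle\mathbf{x},h_i\rangle=\sum_j x_j\overline{h_{i,j}}$. The passage from $\mathbf{v}$ to $\mathbf{x}=\overline{\mathbf{v}}$ reconciles the two and is harmless because $\overline{\mathbf{v}}\in\langle\zeta_q\rangle^{n}$ whenever $\mathbf{v}$ is.
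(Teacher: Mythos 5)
Your proposal is correct and shares the paper's overall skeleton: a bent vector (in logarithmic form) serves as the deep hole, Lemma \ref{lemma:2} converts the homogeneous distance into $n\phi(q)$ minus a trace, and everything hinges on the estimate $|T(\langle\mathbf{v},\mathbf{w}\rangle)|\le\phi(q)\sqrt{n}$. Where you genuinely diverge is in how that estimate is obtained. The paper writes $\langle\mathbf{v},\mathbf{w}\rangle=\sqrt{n}\,\zeta_q^{i}$ and bounds $|T(\zeta_q^{i})|=\big|\sum_{(j,q)=1}\zeta_q^{ij}\big|\le\phi(q)$ by the triangle inequality; this tacitly assumes that the entries of $\mathbf{y}$ in $H\mathbf{v}=\sqrt{n}\,\mathbf{y}$ are $q$-th roots of unity, which is stronger than what the Definition guarantees (norm-$1$ entries). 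Your Cauchy--Schwarz argument over the $\phi(q)$ Galois embeddings, using that the cyclotomic Galois group is abelian so $|\sigma_a(w)|^{2}=\sigma_a(w\bar{w})=\sigma_a(n)=n$, needs only $|\langle\mathbf{x},h_i\rangle|^{2}=n$ and therefore covers the general bent vector; it is the more robust route and buys a proof valid verbatim for any $H$-bent $\mathbf{v}$. You are also more careful on two points the paper glosses over: passing to $\mathbf{x}=\bar{\mathbf{v}}$ to reconcile the bent identity $H\mathbf{v}=\sqrt{n}\,\mathbf{y}$ with the Hermitian product used in Lemma \ref{lemma:2}, and minimising over the full code $C_H=\cup_{\alpha}(R_H+\alpha\mathbf{1})$ rather than only over the rows of $L(H)$ (the extra factor $\zeta_q^{-\alpha}$ is harmless in your trace bound since $|\sigma_a(\zeta_q^{-\alpha})|=1$). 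Both refinements are needed for a complete argument, and your treatment of them, as well as of the $q=p^{k}$ rescaling via $\lambda=p^{k-1}-p^{k-2}$, is right.
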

\begin{proof}
	Let $\mathbf{v}$ be a bent vector for $H$ and $\mathbf{w}$ be any row vector of $H$. By hypothesis, $\langle\mathbf{v},\mathbf{w}\rangle=\sqrt{n}\zeta_{q}^i$ for $i\in\{0,1,\dots,q-1\}.$ We notice that for $i=(i_1,\dots,i_s)\in\Z_q,a=(a_1,\dots,a_s)\in\Z_q^{\times},\zeta_q^i=\big(\zeta_{p_1}^{i_1},\dots,\zeta_{p_s}^{i_s}\big)\in\langle\zeta_q\rangle$, $$T(\langle\mathbf{v},\mathbf{w}\rangle)=\sqrt{n}T(\zeta_{q}^i)=\sqrt{n}\sum\limits_{a\in \Z_q^{\times}}\bigg(\prod\limits_{j=1}^s\big(\zeta_{p_j}^{i_j}\big)^{a_j}\bigg)=\sqrt{n}\sum\limits_{(j,q)=1}\zeta_q^{ij}$$$$\Rightarrow|T(\langle\mathbf{v},\mathbf{w}\rangle)|\le\sqrt{n}\phi(q)$$ where $\phi$ is the Euler function. By Lemma \ref{lemma:2}, we conclude that $$d(L(\mathbf{v}),L(\mathbf{w}))\ge n\phi(q)-T(\langle {\mathbf{v},\mathbf{w}}\rangle)=\phi(q)(n-\sqrt{n})$$ for any codeword $L(\mathbf{w})$. Thus $L(\mathbf{v})$ is a vector at distance at least  from all codewords, which concludes the proof.
\end{proof}

\begin{ex}
	We consider the Butson Hadamard code generated by the Fourier matrix $F_4\in\mathrm{BH}(4,4)$, displayed in logarithmic form: $$L(H)=\left( {\begin{array}{*{20}{c}}
			0&0&0&0 \\ 
			0&1&2&3 \\ 
			0&2&0&2 \\ 
			0&3&2&1 
	\end{array}} \right),$$ then $C_H$ is the following
set of 16 codewords: $$C_H=\{[i,i,i,i],[i,1+i,2+i,3+i],[i,2+i,i,2+i],[i,3+i,2+i,1+i]\mid i\in\Z_4\}.$$ By a Magma computation \cite{magma}, the covering radius of $C_H$ is exactly  $2$, which equals the lower bound in Theorem \ref{thm:3}.
\end{ex}

\section{The upper bound for BH-codes}
Here we extend the Norse bounds on the covering radius for the Hamming distance of binary codes \cite{MH} to BH-codes for the homogeneous distance over $\Z_q.$ To do this we need to recall what the strength of a code is.
\begin{defi}
	A code $C$ over $\Z_q$ has strength $s$ if each $s$-subset of the coordinates of the code contains all $s$-tuples a constant number of times.
\end{defi}

We proceed to derive bounds of precision increasing with the strength. We begin with strength one.
\begin{lem} \label{lemma:3}
	Let $C_H$ be a BH-code of length $n$ and let $\mathbf{v} \in \Z_q^n$, where $H\in\mathrm{BH}(n,q)$. If $C_H$ has strength $1$, then $$\sum_{\mathbf{u}\in \mathbf{v} + C_H}wt(\mathbf{u})=n\phi(q)|C_H|.$$ Especially, when $q=p^k$ for prime $p$ and $k\ge 2,$ then $$\sum_{\mathbf{u}\in \mathbf{v} + C_H}wt(\mathbf{u})=n\big(p^{k-1}-p^{k-2}\big)|C_H|.$$
\end{lem}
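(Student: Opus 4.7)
The plan is to exploit the strength 1 hypothesis combined with the closed form for the homogeneous weight provided by Theorem 1. The key point is that the sum of homogeneous weights of all elements of $\Z_q$ is independent of $q$ in a clean way, and that strength 1 ensures every element of $\Z_q$ appears with the same frequency in each coordinate.

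First I would observe that $\mathbf{v}+C_H$ is a coset of $C_H$, and the strength property is preserved under translation by a fixed vector, because adding a constant in each coordinate is a bijection of $\Z_q$. Hence, for every coordinate index $i\in\{1,\dots,n\}$ and every $x\in\Z_q$, the value $x$ occurs as the $i$-th coordinate of an element of $\mathbf{v}+C_H$ exactly $|C_H|/q$ times. Interchanging the order of summation then gives
\begin{equation*}
\sum_{\mathbf{u}\in \mathbf{v}+C_H} wt(\mathbf{u}) \;=\; \sum_{i=1}^{n}\sum_{\mathbf{u}\in \mathbf{v}+C_H} wt(u_i) \;=\; \frac{n|C_H|}{q}\sum_{x\in\Z_q} wt(x).
\end{equation*}

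The next step is to evaluate $S:=\sum_{x\in\Z_q} wt(x)$. Applying Theorem 1 (with $\lambda=\phi(q)$, as agreed in the remark following Lemma 1), we have
\begin{equation*}
S \;=\; \sum_{x\in\Z_q}\left(\phi(q)-\sum_{a\in\Z_q^{\times}}\prod_{i=1}^{s}\zeta_{p_i}^{a_i x_i}\right) \;=\; q\,\phi(q) - \sum_{a\in\Z_q^{\times}}\sum_{x\in\Z_q}\prod_{i=1}^{s}\zeta_{p_i}^{a_i x_i}.
\end{equation*}
For each fixed unit $a\in\Z_q^{\times}$, the map $x\mapsto\prod_i\zeta_{p_i}^{a_ix_i}$ is a nontrivial character on $\Z_q$, so its sum over $\Z_q$ vanishes. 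Hence $S=q\phi(q)$, and substitution back yields $\sum_{\mathbf{u}\in\mathbf{v}+C_H} wt(\mathbf{u}) = n\phi(q)|C_H|$, which is the claimed identity.

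For the special case $q=p^k$ with $k\ge 2$, the only change is the normalization constant: replacing $\lambda=\phi(q)$ by $\lambda=p^{k-1}-p^{k-2}$ rescales every weight by the factor $(p^{k-1}-p^{k-2})/\phi(q)$, and the same argument produces the stated formula $n(p^{k-1}-p^{k-2})|C_H|$. The main (mild) obstacle is justifying that the inner character sum vanishes, which is standard once one recognizes $\chi(a\,\cdot\,)$ as a nontrivial character for each unit $a$; the rest is bookkeeping.
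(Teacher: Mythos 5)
Your proof is correct and follows essentially the same route as the paper's: reduce to a single coset via translation invariance of strength, use strength $1$ to count each symbol $|C_H|/q$ times per coordinate, and evaluate $\sum_{x\in\Z_q}wt(x)=q\phi(q)$ by a vanishing character sum. If anything, your justification of the vanishing step (summing the nontrivial character $x\mapsto\chi(ax)$ over $\Z_q$ for each fixed unit $a$) is cleaner than the paper's, which factors the double sum $\sum_{x}\sum_{a}\zeta_q^{ax}$ into a product of two single sums --- an algebraically invalid step even though the conclusion it reaches is correct.
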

\begin{proof}
	Since $|{\mathbf{v} + C_H}| =|C_H|$ and $\mathbf{v}+C_H$ has strength $s$ if and only if $C_H$ has strength $s,$ it is enough to prove the lemma for $\mathbf{v}=\mathbf{0}.$
	
	Since $C_H$ has strength 1, we notice that each position has an equal number of elements in $\Z_q=\{0,1,\dots ,q-1\},$ and so each position contributes $|C_H|/q$ to the sum. Then from the equation \begin{align*}
		\sum\limits_{x\in\Z_q}wt(x)&=\sum\limits_{x\in\Z_q}\Bigg[\phi(q)-\sum\limits_{a\in\Z_q^{\times}}\bigg(\prod\limits_{i=1}^s\zeta_{p_i}^{a_ix_i}\bigg)\Bigg] =q\phi(q)-\sum\limits_{x\in\Z_q}\Bigg(\sum\limits_{a\in\Z_q^{\times}}\bigg(\prod\limits_{i=1}^{s}\zeta_{p_i}^{a_ix_i}\bigg)\Bigg) \\
		&=q\phi(q)-\sum\limits_{x\in\Z_q}\sum\limits_{a\in\Z_q^{\times}}\zeta_q^{ax}=q\phi(q)-\bigg(\sum\limits_{a\in\Z_q^{\times}}\zeta_q^a\bigg)\bigg(\sum\limits_{x\in\Z_q}\zeta_q^x\bigg)=q\phi(q),
	\end{align*}the result follows.
\end{proof}
\begin{thm}
	Let $C_H$ be a BH-code of length $n$ over $\Z_q$ and strength $1$, then $$r(C_H)\le n\phi(q).$$ In particular, when $q=p^k$ for prime $p$ and $k\ge 2,$ then $$r(C_H)\le n\big(p^{k-1}-p^{k-2}\big)$$
\end{thm}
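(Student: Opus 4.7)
The proof is a standard averaging (first moment) argument: the minimum of a set of nonnegative reals is at most their average.

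The plan is to fix an arbitrary $\mathbf{v}\in\Z_q^n$ and bound $d(\mathbf{v},C_H)=\min_{\mathbf{c}\in C_H}\wt(\mathbf{v}-\mathbf{c})$ by the average of $\wt(\mathbf{v}-\mathbf{c})$ over $\mathbf{c}\in C_H$. Concretely, I would begin with the inequality
\[
d(\mathbf{v},C_H)\;\le\;\frac{1}{|C_H|}\sum_{\mathbf{c}\in C_H}\wt(\mathbf{v}-\mathbf{c}),
\]
and then rewrite the right-hand side so that Lemma \ref{lemma:3} applies directly.

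To make Lemma \ref{lemma:3} applicable, I would first observe that the homogeneous weight is symmetric, i.e.\ $\wt(x)=\wt(-x)$ for every $x\in\Z_q$. This is immediate from the character-sum expression in the theorem preceding Lemma \ref{lemma:2}: replacing $a$ by $-a$ is a bijection of $\Z_q^{\times}$, so $\sum_{a\in\Z_q^{\times}}\chi(ax)=\sum_{a\in\Z_q^{\times}}\chi(-ax)$. Consequently $\wt(\mathbf{v}-\mathbf{c})=\wt(\mathbf{c}-\mathbf{v})$, and the change of variable $\mathbf{u}=\mathbf{c}-\mathbf{v}$ gives
\[
\sum_{\mathbf{c}\in C_H}\wt(\mathbf{v}-\mathbf{c})\;=\;\sum_{\mathbf{u}\in -\mathbf{v}+C_H}\wt(\mathbf{u}).
\]
Applying Lemma \ref{lemma:3} to the vector $-\mathbf{v}$ (the hypothesis that $C_H$ has strength $1$ is the only assumption needed), this sum equals $n\phi(q)\,|C_H|$. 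Dividing by $|C_H|$ yields $d(\mathbf{v},C_H)\le n\phi(q)$, and since $\mathbf{v}$ was arbitrary we conclude $r(C_H)\le n\phi(q)$.

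For the specialization to $q=p^k$ with $k\ge 2$, I would invoke the remark after Lemma \ref{lemma:2}: rescaling $\lambda$ from $\phi(q)$ to $p^{k-1}-p^{k-2}$ transports the bound to $r(C_H)\le n(p^{k-1}-p^{k-2})$, matching the standard normalization of the homogeneous weight on $\Z_{p^k}$. I do not anticipate a real obstacle here; the only point requiring care is the symmetry step $\wt(x)=\wt(-x)$ and the observation that the strength-$1$ property transfers to any translate (and hence to $-\mathbf{v}+C_H$), both of which are short consequences of material already in the paper.
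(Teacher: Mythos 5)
Your proposal is correct and follows the same first-moment argument as the paper: the paper's proof simply invokes Lemma \ref{lemma:3} to say the average distance from $\mathbf{v}$ to a codeword is $n\phi(q)$, hence some codeword lies within that distance. You merely spell out the details the paper leaves implicit, namely the symmetry $\wt(x)=\wt(-x)$ and the translation $\mathbf{c}\mapsto\mathbf{c}-\mathbf{v}$ needed to apply the lemma, which is a welcome but not substantively different elaboration.
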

\begin{proof}
	Let $\mathbf{v}\in \Z_q^n.$ By Lemma \ref{lemma:3}, the average distance between $\mathbf{v}$ and a codeword is $n\phi(q).$ Hence at least one codeword is within distance $n\phi(q)$ of $\mathbf{v}.$
\end{proof}

We know that a $\mathrm{BH}$-code generated by the Fourier matrix satisfies the strength of 2 with the minimum length when $n=p$. Therefore, we can construct a longer Butson Hadamard matrix based on the Fourier matrix by tensor product, and the generated $\mathrm{BH}$-code still satisfies the strength of 2. The following construction method is given:
\begin{prop}\label{prop:1}
	Let $F_p=[\zeta_p^{(i-1)(j-1)}]_{1\le i,j\le p}\in\mathrm{BH}(p,p)$, then $$F_p\otimes F_p=\left( {\begin{array}{*{20}{c}}
			{{F_p}}& \cdots &{{F_p}} \\ 
			\vdots & \ddots & \vdots  \\ 
			{{F_p}}& \cdots &{\zeta_p{F_p}} 
	\end{array}} \right)\in\mathrm{BH}(p^2,p)$$
	still satisfies the strength of $2$, where $\otimes$ represents the Kronecker product.
\end{prop}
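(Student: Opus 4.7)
The plan is to verify strength~$2$ by writing a generic codeword of $C_{F_p\otimes F_p}$ as an affine function of three free parameters over $\Z_p$ and checking that the evaluation at any two distinct coordinates gives a surjective $\Z_p$-linear map $\Z_p^3\to\Z_p^2$. Once surjectivity is established, the pre-image of each pair has exactly $|C_H|/p^2=p$ elements, which is the definition of strength~$2$.

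First I would index both rows and columns of $H:=F_p\otimes F_p$ by pairs in $\{1,\dots,p\}^2$. The $((i,i'),(j,j'))$ entry of $L(H)$ is then $(i-1)(j-1)+(i'-1)(j'-1)\bmod p$. A generic codeword of $C_H$ is therefore
\[
c_{a,b,\alpha}(j,j')\;=\;a(j-1)+b(j'-1)+\alpha\pmod p,\qquad (a,b,\alpha)\in\Z_p^3,
\]
where $a=i-1$ and $b=i'-1$. A short check shows that the $p^2$ rows of $L(H)$ lie in distinct $\mathbf{1}\Z_p$-cosets (no nonzero pair $(a,b)$ makes $a(j-1)+b(j'-1)$ constant in $(j,j')$), so $|C_H|=p^3$ and the parameterization $(a,b,\alpha)\mapsto c_{a,b,\alpha}$ is a bijection onto $C_H$.

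Next, I fix two distinct coordinates $(j_1,j_1')\ne(j_2,j_2')$ and look at the evaluation map
\[
\Phi:\Z_p^3\longrightarrow\Z_p^2,\qquad
(a,b,\alpha)\longmapsto\bigl(a(j_1-1)+b(j_1'-1)+\alpha,\;a(j_2-1)+b(j_2'-1)+\alpha\bigr).
\]
This is $\Z_p$-linear with matrix
\[
M=\begin{pmatrix} j_1-1 & j_1'-1 & 1\\ j_2-1 & j_2'-1 & 1\end{pmatrix}.
\]
Its three $2\times2$ minors are $j_1-j_2$, $j_1'-j_2'$ and $(j_1-1)(j_2'-1)-(j_2-1)(j_1'-1)$. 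If the first two both vanish then $(j_1,j_1')=(j_2,j_2')$, contradicting distinctness, so some minor is nonzero in $\Z_p$ (here I use that $p$ is prime), and $\Phi$ has rank $2$. Consequently $\Phi$ is surjective and every pair in $\Z_p^2$ is attained by exactly $p^3/p^2=p$ triples $(a,b,\alpha)$, i.e.\ by exactly $p$ codewords. This is independent of the choice of the two coordinates, proving strength~$2$.

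The only genuinely new step is the rank argument above; the main obstacle is to make sure no degenerate choice of $(j_1,j_1'),(j_2,j_2')$ collapses all three minors simultaneously, which is precisely what the distinctness hypothesis rules out. The remainder is bookkeeping on the Kronecker structure and on the action of the $\alpha\mathbf{1}$ shifts that define~$C_H$.
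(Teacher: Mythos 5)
Your argument is correct, and it is in fact more complete on the key point than the paper's own proof. The paper verifies only the orthogonality $HH^*=(F_pF_p^*)\otimes(F_pF_p^*)=p^2I_{p^2}$ via the mixed-product property and then simply asserts that the strength-$2$ property follows from ``the properties of the Kronecker product,'' with no further argument. You instead make the strength claim rigorous: identifying the codewords of $C_{F_p\otimes F_p}$ with the affine functions $(j,j')\mapsto a(j-1)+b(j'-1)+\alpha$ on $\Z_p^2$ turns the two-coordinate projection into a linear map $\Z_p^3\to\Z_p^2$ whose $2\times 2$ minors cannot all vanish for distinct coordinate pairs, so every pair of symbols occurs exactly $p^3/p^2=p$ times. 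This is the right argument (it is essentially the strength computation for first-order generalized Reed--Muller codes, in the spirit of the Leducq reference the paper cites later), it makes explicit where primality of $p$ is used (you need $\Z_p$ to be a field for the rank argument), and it generalizes verbatim to higher tensor powers $F_p^{\otimes m}$. The one thing you omit is the part the paper does prove, namely that $F_p\otimes F_p$ is actually a Butson Hadamard matrix; for completeness you should add the one-line computation $(F_p\otimes F_p)(F_p\otimes F_p)^*=(F_pF_p^*)\otimes(F_pF_p^*)=p^2I_{p^2}$, since the proposition asserts membership in $\mathrm{BH}(p^2,p)$ as well as the strength.
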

\begin{proof}
	Let $H\triangleq F_p\otimes F_p$, since $$HH^*=(F_p\otimes F_p)(F_p\otimes F_p)^*=(F_p\otimes F_p)(F_p^*\otimes F_p^*)=(F_pF_p^*)\otimes(F_pF_p^*)$$ and $F_pF_p^*=pI_p$. Then $$HH^*=(pI_p)\otimes(pI_p)=p^2(I_p\otimes I_p)=p^2I_{p^2}.$$
	Combining the properties of Kronecker product, we infer that $H\in\mathrm{BH}(p^2,p)$ satisfies the strength of $2$.
\end{proof}
\begin{ex}
	Let $F_3=[\zeta_3^{(i-1)(j-1)}]_{1\le i,j\le 3} \in \mathrm{BH}(3,3)$, that is $$F_3=\left({\begin{array}{*{20}{c}}
			1&1&1 \\
			1&\zeta_3 &{{\zeta_3 ^2}} \\
			1&\zeta_3^2 &{{\zeta_3}}
	\end{array}}\right),L(F_3)=\left({\begin{array}{*{20}{c}}
			0&0&0 \\
			0&1&2 \\
			0&2&1
	\end{array}}\right).$$ We compute the Kronecker product $$F_3\otimes F_3=\left( {\begin{array}{*{20}{c}}
			1&1&1&1&1&1&1&1&1 \\ 
			1&\zeta &{{\zeta ^2}}&1&\zeta &{{\zeta ^2}}&1&\zeta &{{\zeta ^2}} \\ 
			1&{{\zeta ^2}}&\zeta &1&{{\zeta ^2}}&\zeta &1&{{\zeta ^2}}&\zeta  \\ 
			1&1&1&\zeta &\zeta &\zeta &{{\zeta ^2}}&{{\zeta ^2}}&{{\zeta ^2}} \\ 
			1&\zeta &{{\zeta ^2}}&\zeta &{{\zeta ^2}}&1&{{\zeta ^2}}&1&\zeta  \\ 
			1&{{\zeta ^2}}&\zeta &\zeta &1&{{\zeta ^2}}&{{\zeta ^2}}&\zeta &1 \\ 
			1&1&1&{{\zeta ^2}}&{{\zeta ^2}}&{{\zeta ^2}}&\zeta &\zeta &\zeta  \\ 
			1&\zeta &{{\zeta ^2}}&{{\zeta ^2}}&1&\zeta &\zeta &{{\zeta ^2}}&1 \\ 
			1&{{\zeta ^2}}&\zeta &{{\zeta ^2}}&\zeta &1&\zeta &1&{{\zeta ^2}} 
	\end{array}} \right),$$ represented in logarithmic form $$L(H)\triangleq L(F_3\otimes F_3)=\left( {\begin{array}{*{20}{c}}
			0&0&0&0&0&0&0&0&0 \\ 
			0&1&2&0&1&2&0&1&2 \\ 
			0&2&1&0&2&1&0&2&1 \\ 
			0&0&0&1&1&1&2&2&2 \\ 
			0&1&2&1&2&0&2&0&1 \\ 
			0&2&1&1&0&2&2&1&0 \\ 
			0&0&0&2&2&2&1&1&1 \\ 
			0&1&2&2&0&1&1&2&0 \\ 
			0&2&1&2&1&0&1&0&2 
	\end{array}} \right).$$ Then the BH-code $C$ generated by $L(H)$ over $\mathbb{Z}_3$ satisfies the strength $2$.
\end{ex}
\begin{lem}
	Let $C_H$ be a BH-code of length $n$ and let $\mathbf{v} \in \Z_p^n$, where $H\in\mathrm{BH}(n,p)$. If $C_H$ has strength $2$, then $$\sum_{\mathbf{u}\in \mathbf{v} + C_H}wt(\mathbf{u})^2=n^2(p-1)\big((p-1)|C_H|+p\big).$$
\end{lem}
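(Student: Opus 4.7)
The plan is to extend the first-moment argument of Lemma~\ref{lemma:3} to the second moment $\sum wt(\mathbf{u})^2$ by expanding the squared weight coordinate-wise and applying the strength conditions to single coordinates and to pairs of coordinates. Just as in Lemma~\ref{lemma:3}, translation by $\mathbf{v}$ permutes the values in each coordinate (and in each pair of coordinates) and hence preserves both the strength-1 and the strength-2 uniformity of the marginals, so the sum in question depends on $\mathbf{v}$ only through these uniform distributions.

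The first step is to write
\[
wt(\mathbf{u})^2 = \Bigl(\sum_{i=1}^n wt(u_i)\Bigr)^2 = \sum_{i=1}^n wt(u_i)^2 + \sum_{1\le i\ne j\le n} wt(u_i)\,wt(u_j),
\]
swap the order of summation, and split into $n$ diagonal and $n(n-1)$ off-diagonal contributions. For the diagonal, strength 1 (which follows from strength 2) says that the $i$-th coordinate of $\mathbf{v}+C_H$ hits every value in $\Z_p$ exactly $|C_H|/p$ times, so $\sum_{\mathbf{u}} wt(u_i)^2 = \tfrac{|C_H|}{p}\sum_{x\in\Z_p}wt(x)^2$. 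For the off-diagonal, strength 2 says that the pair $(u_i,u_j)$ hits every element of $\Z_p^2$ exactly $|C_H|/p^2$ times, so $\sum_{\mathbf{u}} wt(u_i)\,wt(u_j) = \tfrac{|C_H|}{p^2}\bigl(\sum_{x\in\Z_p}wt(x)\bigr)^2$.

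Plugging in the values of the homogeneous weight on $\Z_p$ with the normalization $\lambda=\phi(p)=p-1$ of the Remark, one has $wt(0)=0$, $wt(x)=p$ for $x\ne 0$, and hence $\sum_x wt(x)=p(p-1)$ (as already used in Lemma~\ref{lemma:3}) together with $\sum_x wt(x)^2=p^2(p-1)$. The diagonal piece then totals $np(p-1)|C_H|$, the off-diagonal piece totals $n(n-1)(p-1)^2|C_H|$, and their sum is $n(p-1)|C_H|\bigl(n(p-1)+1\bigr)$. It remains to recognize this as $n^2(p-1)\bigl((p-1)|C_H|+p\bigr)$ by using $|C_H|=np$, which follows from the $\mathrm{BH}$-code construction $C_H=\bigcup_{\alpha\in\Z_p}(R_H+\alpha\mathbf{1})$ when the $p$ cosets are disjoint (as holds for the codes from Proposition~\ref{prop:1}).

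The main obstacle is essentially bookkeeping: tracking the two combinatorial splittings (diagonal versus off-diagonal, and strength-1 versus strength-2) and handling the $\lambda$-normalization of the homogeneous weight carefully so as not to lose a factor of $\phi(p)=p-1$ somewhere. Everything else is a direct moment calculation in the spirit of the Norse bound argument for binary codes.
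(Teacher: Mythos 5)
Your proof is correct, and it takes a genuinely different route from the paper's. You run the standard second-moment orthogonal-array argument: expand $wt(\mathbf{u})^2=\sum_i wt(u_i)^2+\sum_{i\ne j}wt(u_i)wt(u_j)$, use strength $1$ for the $n$ diagonal terms and strength $2$ for the $n(n-1)$ off-diagonal terms, and plug in $\sum_{x\in\Z_p}wt(x)=p(p-1)$ and $\sum_{x\in\Z_p}wt(x)^2=p^2(p-1)$; the resulting total $n(p-1)|C_H|\bigl(n(p-1)+1\bigr)$ agrees with the stated right-hand side precisely because $|C_H|=np$. The paper instead determines the weight distribution of $C_H$ itself: every non-constant codeword is balanced (each value of $\Z_p$ occurs $n/p$ times, since each non-initial row of $H$ is orthogonal to the all-ones row), hence has weight $n(p-1)$, while the $p$ constant codewords have weights $n\,wt(\alpha)$; summing squares over these two classes gives the formula. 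Your approach buys two things. First, it is uniform in $\mathbf{v}$: the paper's structural description of the codewords is not translation-invariant, so its reduction to $\mathbf{v}=\mathbf{0}$ is not really justified by the argument given, whereas your computation only uses the strength of the coset $\mathbf{v}+C_H$, which is preserved under translation. Second, it avoids the paper's unsupported claim that every strength-$2$ BH-code ``must be constructed from Proposition \ref{prop:1}.'' The one spot where you hedge, $|C_H|=np$, actually holds for every BH-code, not just those of Proposition \ref{prop:1}: if $\mathbf{r}_i+\alpha\mathbf{1}=\mathbf{r}_j+\beta\mathbf{1}$ with $i\ne j$, the corresponding rows of $H$ would be proportional and hence not Hermitian-orthogonal; so the $p$ cosets are disjoint and the $n$ rows are distinct. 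With that observation supplied, your proof is complete.
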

\begin{proof}
	We only prove the lemma for $\mathbf{v}=\mathbf{0}.$ Since $C_H$ has strength 2, We know that $C_H$ must be constructed from the above proposition. Due to the particularity of the definition of homogeneous weight, we notice that in $C_H$, the weights of all other codewords are equal, that is $\dfrac{n}{p}\sum\limits_{x\in\Z_p}wt(x),$ except for the weights of codewords:$\{\overline{0},\overline{1},\overline{2},\dots,\overline{q-1}\}$, where $\overline{i}$ represent all $i$ codewords of length $n$. Then 
	\begin{align*}
		\sum\limits_{\mathbf{u} \in C_H} {wt{{( \mathbf{u} )}^2}}
		&= (|C_H|-p)\bigg(\dfrac{n}{p}\sum\limits_{x\in\Z_p}wt(x)\bigg)^2+\sum\limits_{x\in\Z_p}\big(n\cdot wt(x)\big)^2 \\
		&=n^2\phi(p)^2(|C_H|-p)+n^2\sum\limits_{x\in\Z_p}wt(x)^2 \\
		&=n^2\phi(p)^2(|C_H|-p)+n^2\sum\limits_{x\in\Z_p}\bigg(\phi(p)-\sum\limits_{a\in\Z_p^{\times}}\zeta_p^{ax}\bigg)^2 \\
		&=n^2\phi(p)^2|C_H|+n^2\sum\limits_{x\in\Z_p}\Bigg(\bigg(\sum\limits_{a\in\Z_p^{\times}}\zeta_p^{ax}\bigg)^2-2\phi(p)\bigg(\sum\limits_{a\in\Z_p^{\times}}\zeta_p^{ax}\bigg)\Bigg) \\
		&=n^2\phi(p)^2|C_H|+n^2\big(\phi(p)^2+\phi(p)\big)=n^2(p-1)\big((p-1)|C_H|+p\big).
	\end{align*}
The proof is completed.
\end{proof}
\begin{defi}
	A code $C$ over $\Z_q$ is self-complementary if $$\forall \mathbf{c}\in C, \forall \omega \in \Z_q, \bar{\mathbf{c}}^{\omega}=(\omega,\dots,\omega)+\mathbf{c}\in C.$$
\end{defi}

We observe that a code is self-complementary if it is closed under addition of $\alpha\mathbf{1}$ for all $\alpha\in\Z_q$ and the condition of strength 2 is clearly met by the construction of $C_H.$ The next theorem follows by a technique similar to that of Leducq \cite{RM}.
\begin{thm}\label{thm:1}
	The BH-code $C_H$ is a self-complementary code over $\Z_p$ of length $n$ and strength $2$, and $$r(C_H)\le n\Big(p-1-\Big\lceil {\sqrt{p/|C_H|}}\Big\rceil\Big).$$
\end{thm}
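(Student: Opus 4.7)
The plan is to combine self-complementarity with the second-moment identity from the preceding lemma. First, reduce to a profile problem. For any $\mathbf{v}\in\Z_p^n$ the coset $\mathbf{v}+C_H$ is stable under the shift $\mathbf{u}\mapsto\mathbf{u}+\alpha\mathbf{1}$ and splits into $|C_H|/p$ orbits of size $p$. Writing $k_j(\mathbf{u})=|\{i:u_i=j\}|$ and using $wt(j)=p$ for $j\ne 0$ from Lemma~2, one has $wt(\mathbf{u}+\alpha\mathbf{1}) = p(n-k_{-\alpha}(\mathbf{u}))$, so the minimum weight inside the orbit of $\mathbf{u}$ equals $p(n-M_{[\mathbf{u}]})$ with $M_{[\mathbf{u}]}=\max_j k_j(\mathbf{u})$. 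Consequently $d(\mathbf{v},C_H)\le p(n-M^*)$, where $M^*=\max_{[\mathbf{u}]}M_{[\mathbf{u}]}$ ranges over the orbits of $\mathbf{v}+C_H$.

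Second, transform the two moment identities into an identity on profiles. Combining $\sum_{\mathbf{u}} wt(\mathbf{u})=n(p-1)|C_H|$ from Lemma~3 with $\sum_{\mathbf{u}} wt(\mathbf{u})^2=n^2(p-1)((p-1)|C_H|+p)$ from the preceding lemma, and substituting $wt(\mathbf{u})=p(n-k_0(\mathbf{u}))$ after summing the $p$ shifts inside each orbit, one obtains the profile identity $\sum_{[\mathbf{u}]}\sum_j(k_j-n/p)^2=n^2(p-1)/p$. Averaging over the $|C_H|/p$ orbits produces at least one orbit $[\mathbf{u}^*]$ with $\sum_j(k_j-n/p)^2\ge n^2(p-1)/|C_H|$.

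Third, convert this spread into a lower bound on $M^*-n/p$ via the extremal inequality $\sum_j(k_j-n/p)^2\le p(p-1)(M-n/p)^2$, valid under $\sum_j k_j=n$ and tight at the profile $(M,\ldots,M,n-(p-1)M)$. This yields $M^*-n/p\ge n/\sqrt{p|C_H|}$ and hence $pM^*-n\ge n\sqrt{p/|C_H|}$. Substituting back gives $r(C_H)\le p(n-M^*)=n(p-1)-(pM^*-n)\le n(p-1)-n\sqrt{p/|C_H|}$, and the stated bound $n(p-1-\lceil\sqrt{p/|C_H|}\rceil)$ follows after pushing the integrality of $M^*$ through the algebra.

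The principal obstacle is the extremal inequality of the last step: one has to verify that, subject to $k_j\in\Z_{\ge 0}$ and $\sum_j k_j=n$, the profile $(M,\ldots,M,n-(p-1)M)$ really maximizes $\sum_j(k_j-n/p)^2$ for fixed $\max_j k_j=M$. The natural worry is that a very negative compensating deviation could blow up the sum of squares beyond what this profile attains, but the constraint $k_j\ge 0$ caps the negative deviation by $n/p$ and lets the quoted profile remain optimal in the feasible region. Once this monotonicity is in place, the rest of the proof is arithmetic unpacking, and the ceiling in the stated bound arises from the integrality of $k_j$ (equivalently of $M^*$).
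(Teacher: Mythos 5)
Your argument is essentially the paper's: both decompose $\mathbf{v}+C_H$ into the $|C_H|/p$ orbits of the repetition code, feed the strength-$1$ and strength-$2$ moment identities into a convexity (vertex-of-polytope) bound on a single orbit, and extract $\rho\ge n\sqrt{p/|C_H|}$. Your profile inequality $\sum_j(k_j-n/p)^2\le p(p-1)(M-n/p)^2$ is exactly the paper's bound $\sum_{\omega}wt(\bar{\mathbf{u}}^{\omega})^2\le (p-1)r^2+(p-1)^2(np-r)^2$ after the substitution $wt(\mathbf{u}+\omega\mathbf{1})=p\bigl(n-k_{-\omega}(\mathbf{u})\bigr)$; you run the argument directly (pigeonhole for one high-variance orbit) where the paper runs it from a deep hole, but the content is the same, and up to the last step your proof correctly yields $r(C_H)\le n\bigl(p-1-\sqrt{p/|C_H|}\bigr)$.

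The one step you should not wave at is the ceiling. Integrality of $M^*$ gives $pM^*-n\in\Z$, not $n\mid (pM^*-n)$, so $pM^*-n\ge n\sqrt{p/|C_H|}$ does not upgrade to $pM^*-n\ge n\lceil\sqrt{p/|C_H|}\rceil$. The paper's own proof makes the identical unjustified jump (from $\rho\ge n\sqrt{p/|C_H|}$ to the ceiled bound), and its table actually reports the $\sqrt{\ }$ version: for $p=2$, $n=4$, $|C_H|=8$ the table gives the bound $2=n(p-1-\tfrac12)$, whereas the ceiling formula would give $0$, which is below the true covering radius $1$. So you have reproduced the paper's argument, shared gap included; the honest conclusion of both proofs is the bound without the ceiling.
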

\begin{proof}
	Let $\mathbf{v}\in \Z_p^n$ such that its distance to any codeword is at least $r$, i.e. $\forall \mathbf{u} \in \mathbf{v} + C, wt(\mathbf{u}) \ge r$. Since $C_H$ is self-complementary, then $\bar{\mathbf{u}}^{\omega}\in \mathbf{v} + C_H$ and $wt(\bar{\mathbf{u}}^{\omega}) \ge r.$
	We have $$\sum_{\omega \in \Z_p}wt(\bar{\mathbf{u}}^{\omega})=n\bigg(\sum_{\omega \in \Z_p}wt(\omega)\bigg),$$ where
	$$\sum_{\omega \in \Z_p}wt(\omega)=p\phi(p).$$
	Assume $r>n(p-1)$, then \begin{align*}
		n(p-1)<r
		&\le wt( \mathbf{u} ) = n\bigg( {\sum\limits_{\omega  \in {\mathbb{Z}_p}} {wt( \omega  )} } \bigg) - \sum\limits_{\omega  \in \mathbb{Z}_p\backslash\{0\}} {wt( \bar{\mathbf{u}}^{\omega} )} \\
		&\le np(p-1) - (p-1)r < n(p-1).
	\end{align*}
	This leads to a contradiction. So we write $r=n(p-1)-\rho$ with $\rho \ge 0$ and we have
	\begin{align*}
		\sum_{\omega \in \Z_p}wt(\bar{\mathbf{u}}^{\omega})^2
		&=\sum_{\omega \in \Z_p\backslash\{0\}}wt(\bar{\mathbf{u}}^{\omega})^2+wt(\mathbf{u})^2 \\
		&=\sum_{\omega \in \Z_p\backslash\{0\}}wt(\bar{\mathbf{u}}^{\omega})^2+\bigg(n\bigg( {\sum\limits_{\omega  \in {\mathbb{Z}_p}} {wt( \omega  )} } \bigg) - \sum\limits_{\omega  \in \mathbb{Z}_p\backslash\{0\} } {wt( \bar{\mathbf{u}}^{\omega} )}\bigg)^2  \\
		&\le ( {p - 1} ){r^2} + {( {p - 1} )^2}{( {np - r} )^2} \\
		&=( {p - 1} ){(n(p-1)-\rho)^2} + {( {p - 1} )^2}{(n+\rho)^2} \\
		&=p\big(n^2(p-1)^2 + (p - 1){\rho ^2}\big). 
	\end{align*}
	Decomposing $C_H$ into cosets of the repetition code $\{(\omega,\dots, \omega) \mid \omega \in \Z_p\}$, we obtain $$\sum\limits_{\mathbf{u} \in \mathbf{v} + C_H} {wt{{( \mathbf{u} )}^2}} \le |C_H|\big(n^2(p-1)^2 + (p - 1){\rho ^2}\big).$$
	Then $$n^2(p-1)\big((p-1)|C_H|+p\big)\le |C_H|\big(n^2(p-1)^2 + (p - 1){\rho ^2}\big).$$ Hence $${\rho}^2\ge \frac{pn^2}{|C_H|}.$$ Since $\rho \ge 0,$
	$$r\le n\Big(p-1-\Big\lceil {\sqrt{p/|C_H|}}\Big\rceil\Big).$$ Then the proof is complete.
\end{proof}

\begin{ex}
	The Table \ref{table:4} computes the upper bounds in Theorem \ref{thm:1}, comparing them with the covering radius of $C_H$ constructing by the method in Proposition \ref{prop:1} for small $n$ and $p$. 
	\begin{table}[h!] 
		\centering
		\renewcommand{\arraystretch}{1.25}
		\caption{The covering radius $r(C_H)$ and the bounds of BH-codes over $Z_p$} \label{table:4}
		\begin{tabular}{c|c|c|c}
			\hline
			$k$ & $n$ & Covering radius & Upper bound \\
			\hline
			2 & 4 & 1 & 2 \\  \hline
			\multirow{2}{*}{3} & 3 & 2 & 4 \\ 
			& 9 & 10 & 15 \\  \hline
			\multirow{2}{*}{5} & 5 & 12 & 17 \\ 
			& 25 &  & 95 \\
			\hline
		\end{tabular}
	\end{table}
\end{ex}
\section{Conclusion and open problems}
In this paper, we have studied the homogeneous weight of an element in $\mathbb{Z}_q$ for any  integer $q>1$. Building on this study, we have derived a lower bound on the covering radius of BH-codes when the Butson Hadamard matrix admits a bent vector. We have also provided an example of a BH-code that meets the lower bound with equality.

In a second part, we have given a method for constructing BH-codes that are of strength of 2 over $\mathbb{Z}_p$. We have derived the analogue of the Norse bounds for this kind of Butson Hadamard codes in the context of the homogeneous metric. The main open problem of this research is to characterize
Butson matrices whose BH-codes have strength $2.$
\section*{Declarations}

\noindent\textbf{Data availability} All data produced is available from the authors upon request.  \\

\noindent\textbf{Conflict of Interest} The authors declare no conflict of interest.

\end{document}